\newcommand{\invt}[2]{c_{#1}(#2)}
\newtheorem{theorem}{Theorem}[section]
\newtheorem{lemma}[theorem]{Lemma}
\newenvironment{proof}[1][Proof]{\begin{trivlist}
\item[\hskip \labelsep {\bfseries #1}]}{\end{trivlist}}
\newcommand{\qed}{\nobreak \ifvmode \relax \else
      \ifdim\lastskip<1.5em \hskip-\lastskip
      \hskip1.5em plus0em minus0.5em \fi \nobreak
      \vrule height0.75em width0.5em depth0.25em\fi}
\newcommand{\eqnref}[1]{(\ref{#1})}
\title{Nerio: 
Leader Election and Edict Ordering}
\author{Robbert van Renesse, Fred B. Schneider, Johannes Gehrke}
\date{Department of Computer Science, Cornell University}
\begin{document}

\maketitle

\section{Introduction}

Coordination in a distributed system is facilitated if there is a
unique process,
the \emph{leader}, to manage the other processes.
The leader creates \emph{edicts} and sends them to other processes
for execution or forwarding to other processes.
The leader may fail, and when this occurs
a leader election protocol selects a replacement.
That protocol satisfies the following properties:

\begin{itemize}
\item \emph{Leader Uniqueness}: At any time, at most one process
is \emph{leader}.
\item \emph{Edict Validity}: Only leaders can create edicts.
\item \emph{Edict Ordering}: Recipients of multiple edicts
can determine the real-time order in which the edicts were created.
\item \emph{Leader Stability}: If a process is leader then, in the absence
of failures and in the presence of timely communication and processing,
it remains leader.
\item \emph{Eventual Election}: If there is no leader then, in the presence
of sufficient timely communication and processing and a bounded number of
failures, a leader is elected.
\item \emph{Fault Tolerance}: Some number of crash failures are tolerated.
\item \emph{Efficiency}: The time and storage, processing, and networking
resources required by the protocol are reasonable.
\end{itemize}

We assume that processes can exhibit failures: a process
operates correctly until a failure causes that process to stop taking
execution steps.
Crashed processes are assumed to maintain the values of their variables
although these variables are no longer accessible.
The clock at a process is a variable and the exposition is simplified if
that clock continues to advance even after the process has failed (and
the clock is no longer accessible).
Message delivery latencies and processing times are assumed to be unbounded.
Message loss and reordering by the network is allowed, and network
partitioning is permitted too.

\emph{Nerio} is a class of leader election protocols that implement
these properties,
Besides developing this class, we derive refinements for two plausible
environments: one
assumes bounded drift of clock rate with respect to the rate of real time;
the second assumes bounded differences between clock values
on any two processes at the same time.

Nerio protocols are based on granting leases~\cite{GC89} and require
that failure scenarios are characterized by
quorum systems~\cite{Tho78}, a combination first found in
the leader election protocol of Fetzer and S\"{u}{\ss}kraut in~\cite{FS06}.
But leader election properties (Leader Uniqueness, Leader Stability,
and Eventual Election) alone offer little value,
since a leader may no longer be the leader by the time
it sends a message
let alone when such a message is received by another process.
Our Nerio protocols, which in addition satisfy
Edict Validity and Edict Ordering properties, do provide value in
asynchronous environments because
edicts sent by leaders can be interpreted in the order of their creation,
even if the processes that sent the edicts have ceased being leaders.

\subsection*{Formalizing the Properties}

Consider a finite set of processes $P = \{p, ...\}$.
Let $\textit{isLeader}_p(t)$ be the property that, at time $t$,
process $p$ is leader.
Formally, Leader Uniqueness is the following:

\begin{itemize}
\item[] \textit{Leader Uniqueness}:
\begin{equation}\label{eq:uniqueness}
\forall p, q \in P, t: (\textit{isLeader}_p(t) \wedge \textit{isLeader}_q(t))
\Rightarrow (p = q).
\end{equation}
\end{itemize}

A leader can create \emph{edicts} that it sends to other
processes.
For an edict $e$, define $e.\textit{creator}$ to be the process that
created $e$, and
$e.\textit{created}$ to be the real time at which $e$ is created.
(Note that even the process itself cannot know this time.)
Only leaders can create edicts:

\begin{itemize}
\item[] \textit{Edict Validity}:
\begin{equation}\label{eq:edictvalidity}
\forall e: \textit{isLeader}_{e.\textit{creator}}(e.\textit{created})
\end{equation}

\end{itemize}
Edict Ordering means that
\begin{itemize}
\item There is a total ordering $\prec$ on edicts.
\item For edicts $e$ and $e'$, $e.\textit{created} < e'.\textit{created}
						\Rightarrow e \prec e'$.
\item Any recipient of edicts $e$ and $e'$ can ascertain whether $e \prec e'$
or $e' \prec e$ holds.
\end{itemize}

However, Edict Ordering does not imply that receivers all receive the same
set of edicts.
Let $\textit{Order}_p(e_1, e_2)$ mean that process $p$ received edicts
$e_1$ and $e_2$, and believes that $e_1$ was created before $e_2$.
Formally, Edict Ordering is the following:

\begin{itemize}
\item[] \textit{Edict Ordering}:
\begin{equation}\label{eq:edictordering}
\forall p, e_1, e_2: \textit{Order}_p(e_1, e_2) \Leftrightarrow
				e_1.\textit{created} < e_2.\textit{created}
\end{equation}
\end{itemize}

In order to formalize Leader Stability and Eventual Election formally, we
assume that there is a time after which message latencies between
correct processes are bounded by a known constant $d$, and there are no
more failures.
We call this the Global Stabilization Time (\texttt{GST}).
We do not know when \texttt{GST} is, only that it will happen eventually.
Then we can have the following properties:

\begin{itemize}
\item[] \textit{Leader Stability}:
\begin{equation}
\exists \textit{GST}: \forall t_1, t_2 > \texttt{GST}, ~ p \in P: (\textit{isLeader}_p(t_1) \wedge
t_1 < t_2) \Rightarrow \textit{isLeader}_p(t_2)
\end{equation}
\item[] \textit{Eventual Election}:
\begin{equation}
\exists \textit{GST}: \exists t > \texttt{GST}, p \in P: \textit{isLeader}_p(t)
\end{equation}
\end{itemize}

In Section~\ref{sec:class}, we describe the Nerio class of leader election
protocols that leverage the properties of quorum systems instead of
requiring accurate failure detection.
Section~\ref{sec:drift} describes a protocol in this class; it
assumes bounded clock drift.
Section~\ref{sec:skew} describes another protocol that assumes
that there is a bound on how much two clocks may differ.
We compare the two protocols in Section~\ref{sec:compare}.
In Section~\ref{sec:release} we show how a process can give up its grants
to a lease if so desired.
Section~\ref{sec:edicts} shows how Nerio protocols support
Edict Validity and Edict Ordering.
We show that the protocols satisfy Leader Stability in
Section~\ref{sec:stability}, while Section~\ref{sec:eventual}
demonstrates that the protocols satisfy Eventual Election.
A discussion of various issues follows in Section~\ref{sec:discussion}.
Section~\ref{sec:related} discussion prior work.

\section{A Class of Leader Election Protocols}\label{sec:class}

Let $\cal Q$ be a quorum system on $P$.  That is: $\cal Q$ is a set of
process sets such that

\begin{equation}
\forall Q \in {\cal Q}: Q \subseteq P
\end{equation}
\begin{equation}\label{eq:overlap}
\forall Q_1, Q_2 \in {\cal Q}: Q_1 \cap Q_2 \ne \emptyset
\end{equation}

\noindent
An oft-used quorum system consists of all subsets that are majorities in $P$,
that is, $\forall Q \in {\cal Q} \Rightarrow |Q| > |P| / 2$.

Each process $p$ has the following state variables (we use upper case
characters to denote local variables):

\begin{itemize}
\item[] $C_p$ (clock): a monotonically increasing clock at process $p$;
\item[] $A_p$ (assignee): a process, initially $p$ itself;
\item[] $F_p$ (finish): a clock value measured on the clock of process $p$, initially~0;
\item[] $E_p$ (expiration): another clock value measured on the clock of process $p$, initially~0.
\end{itemize}

\noindent
If $X_p$ is a local variable at process $p$, then we write
$X_p(t)$ for the value of $X_p$ at real time $t$.

Assume that $C_p(t)$ is continuous and
satisfies the following two conditions, which should hold for
the clocks found on real processes:

\begin{itemize}
\item[] \textit{Monotonicity}:
\begin{equation}
\forall t_1, t_2: t_1 < t_2 \Rightarrow C_p(t_1) < C_p(t_2)
\end{equation}
\item[] \textit{Growth}:
\begin{equation}\label{eq:growth}
\forall T > 0: \exists t: C_p(t) \ge T
\end{equation}
\end{itemize}

In practice, the hardware clock increases in a stepwise fashion
rather than continuously.
This is not observable if a clock has a sufficiently high resolution relative
to the speed at which processes advance.
A process can only sample its clock, so by obtaining a value $T$
the process only learns that between the time that the process
requested the sample and the time that it obtained the sample, the
value of the clock was $T$.
We assume that the clock advances from one sample to the next.  This
can be ensured by making the clock a pair consisting of the hardware
clock and a counter that is reset each time the hardware clock advances
and is incremented each time the clock is sampled.  This composite
clock is then ordered lexicographically.
Because of the asynchronous nature of our system,
an arbitrary interval may have elapsed between when the sample
sample is taken and when it is returned to the process.
Therefore, a process cannot tell the difference between a clock
that increases continuously, and one that does not.

Assuming that $C_p$ increases we can
define an inverse function $\invt{p}{T}$ on clocks with the following
properties:

\begin{equation}
C_p(\invt{p}{T}) = T
\end{equation}
\begin{equation}
\invt{p}{C_p(t)} = t
\end{equation}

\begin{lemma}\label{lemma:clock}
\[
\forall p \in P, t, T: C_p(t) < T \Leftrightarrow t < \invt{p}{T}
\]
\end{lemma}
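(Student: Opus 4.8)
The plan is to reduce both directions to strict monotonicity of $C_p$, using the defining identity $C_p(\invt{p}{T}) = T$ to rewrite the threshold $T$ as a clock value. First I would upgrade the stated Monotonicity implication $t_1 < t_2 \Rightarrow C_p(t_1) < C_p(t_2)$ into a full biconditional $t_1 < t_2 \Leftrightarrow C_p(t_1) < C_p(t_2)$. The nontrivial direction is the converse: assuming $C_p(t_1) < C_p(t_2)$, I argue by trichotomy that $t_1 < t_2$ must hold, since $t_1 = t_2$ would force $C_p(t_1) = C_p(t_2)$ and $t_1 > t_2$ would force $C_p(t_1) > C_p(t_2)$ by Monotonicity, each contradicting the hypothesis.

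With this biconditional in hand, the lemma is immediate by substitution. I would instantiate $t_2 := \invt{p}{T}$ and invoke $C_p(\invt{p}{T}) = T$, obtaining $t < \invt{p}{T} \Leftrightarrow C_p(t) < C_p(\invt{p}{T}) = T$, which is exactly the claim. Read separately, the reverse implication ($t < \invt{p}{T} \Rightarrow C_p(t) < T$) follows directly from Monotonicity applied to $t < \invt{p}{T}$, whereas the forward implication ($C_p(t) < T \Rightarrow t < \invt{p}{T}$) is the one that genuinely requires the converse direction established above.

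I do not expect a serious obstacle. The analytic content, namely that $\invt{p}{\cdot}$ is a well-defined inverse satisfying $C_p(\invt{p}{T}) = T$ on the relevant range, is already granted by the preceding assumptions of continuity, Monotonicity, and Growth, which make $C_p$ a strictly increasing bijection via the intermediate value theorem. The only point requiring care is the forward direction: I must derive it from injectivity plus trichotomy rather than by applying $\invt{p}{\cdot}$ to both sides, to avoid presupposing the order-preservation property I am trying to prove. Everything else is routine substitution.
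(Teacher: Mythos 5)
Your proof is correct, and in substance both arguments come down to the same two ingredients: strict monotonicity of $C_p$ and a characterization of $\invt{p}{T}$. The route differs slightly from the paper's. The paper (in its appendix) defines $\invt{p}{T} \equiv \min_{t'}(C_p(t') \ge T)$ and proves each direction by contradiction: for the forward direction it assumes $t \ge \invt{p}{T}$, applies Monotonicity to get $C_p(t) \ge C_p(\invt{p}{T}) \ge T$, and contradicts $C_p(t) < T$; for the reverse direction it contradicts the minimality of $\invt{p}{T}$ directly. You instead work from the exact-inverse identity $C_p(\invt{p}{T}) = T$ stated in the main text, upgrade Monotonicity to a biconditional by trichotomy, and substitute. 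Your version is shorter and arguably cleaner, and your trichotomy step is exactly the contrapositive content of the paper's forward-direction contradiction, so nothing is missing. What the paper's formulation buys is robustness: the minimality definition (and the inequality $C_p(\invt{p}{T}) \ge T$ rather than equality) continues to make sense for the stepwise, non-continuous clocks the paper discusses, where $T$ may be skipped over and no $t$ with $C_p(t) = T$ exists; your argument leans on continuity (via the intermediate value theorem) to guarantee the exact inverse, which is legitimate under the stated assumptions but does not generalize as far.
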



\noindent Let $\gamma_{p,q}(t)$ be the predicate 
\[
\gamma_{p,q}(t) \;\; \equiv \;\; A_q(t) = p ~\wedge~ C_q(t) < F_q(t).
\]
\noindent If $\gamma_{p,q}(t)$ holds, we say that,
at time $t$, \emph{process $q$ grants a lease to process $p$}.  Note
that a process cannot grant a lease to two different processes at the
same time $t$, because a variable (\emph{e.g.}, $A_q(t)$) can have only
one value at time $t$.

We can now define formally what it means to be leader:

\begin{equation}\label{eq:quorum}
\textit{isLeader}_p(t) \equiv
\exists Q \in {\cal Q}: (\forall q \in Q: \gamma_{p,q}(t))
\end{equation}

\noindent
That is, process $p$ is leader at time $t$ iff
a quorum of processes grant a lease to $p$ at time $t$.
It should be clear to the reader that \eqnref{eq:quorum} implies
\eqnref{eq:uniqueness}: because of the intersection property of
quorums (Equation~\eqnref{eq:overlap})
there cannot be two different quorums, one in which all processes are
granting a lease to $p_1$,
and another quorum in which all processes are granting a lease to
a different process $p_2$, at the same time.

We need an implementation of $\textit{isLeader}_p$.
Each process $p$ has a variable $E_p$, which gives an
expiration time of $p$'s leadership, initially 0.
Like $F_p$, $E_p$ is measured on $p$'s clock.
In Nerio protocols, the following invariant holds:

\begin{equation}\label{eq:isleader}
\forall p, t: (C_p(t) < E_p(t)) \Rightarrow \textit{isLeader}_p(t) 
\end{equation}

(The implication holds only in one direction because, as we shall see,
processes extend their grants conservatively, and thus it may be that
a quorum of processes are still granting a lease to $p$ after $p$ gives up
on the lease.)

Combining~\eqnref{eq:isleader} and~\eqnref{eq:quorum}
and substituting $\gamma_{p,q}(t)$, we get the following property:

\begin{equation}\label{eq:invariant}
\forall p, t: (C_p(t) < E_p(t)) \Rightarrow
(\exists Q \in {\cal Q}: (\forall q \in Q: A_q(t) = p \wedge C_q(t) < F_q(t)))
\end{equation}

We take this as the defining characteristic of a Nerio class leader
election protocol.

At this point it is useful to consider what happens if a process
crashes.
By the \textit{Growth} condition (Equation~\eqnref{eq:growth}),
the clock of the process continues increasing.
We need this in order to ensure that if a crashed process $p$ was a leader,
eventually it stops being leader (because $C_p(t) < E_p(t)$ becomes
false), and if a crashed process $q$ granted a lease, eventually
this lease expires (because $C_q(t) < F_q(t)$ becomes false).
Since a crashed process cannot produce any output, having the clock
stop is indistinguishable from a clock that continues to increase.\footnote{
In practice, a hardware clock often continues to increase for some
amount of time as it is backed up by an internal battery.}

Below we will show examples of protocols that maintain~\eqnref{eq:invariant},
given certain assumptions about the environment.

\section{Clocks with Bounded Drift}\label{sec:drift}

Assume the drift (accuracy of rate) of each clock is bounded by a
constant $\rho$ per time unit.  That is:

\begin{equation}\label{eq:drift}
\forall p \in P, t, \delta: C_p(t) + (1 - \rho) \delta \le C_p(t + \delta) \le C_p(t) + (1 + \rho) \delta.
\end{equation}

\noindent
In other words, during a real-time period $\delta$, the clock of a process
may advance by as little as $(1 - \rho) \delta$, or as much as
$(1 + \rho) \delta$.

In the Nerio class protocol that we derive in this section,
a process $p$ never decreases $F_p$.
Thus the protocol maintains the following invariant:

\begin{equation}\label{eq:monotonic}
\forall p \in P, t_1, t_2: t_1 < t_2 \Rightarrow F_p(t_1) \le F_p(t_2).
\end{equation}

\noindent
Furthermore, consistent with the meaning of a lease,
a process $p$ never changes $A_p$ if $C_p < F_p$.
As a result, once a process $p$ has granted a lease to $A_p$, this
grant remains until real time $\invt{p}{F_p}$.

A process $p$ trying to become leader (or extend the period
during which it is leader) executes the following algorithm, which
we call \texttt{Obtain Quorum with Bounded Drift}, or \texttt{OQwBD}
for short.
Process $p$ uses a temporary $\textit{Start}_p$ into which it stores
the starting time of the algorithm:

\begin{enumerate}
\item set $\textit{Start}_p := C_p$ (sample starting time);
\item select a real time period $\delta$, $\delta > 0$;
\item broadcast $\langle \texttt{grantRequest}, p, \textit{Start}_p, \delta \rangle$.
\end{enumerate}

\noindent Upon receipt of a \texttt{grantRequest} message,
a process $q$ does the following:

\begin{enumerate}
\setcounter{enumi}{3}
\item $T_q := C_q$ (save local time into a temporary variable $T_q$);
\item if $p \ne A_q \wedge T_q < F_q$, then ignore the request
($q$ is already granting a lease to $A_q$, $A_q \ne p$);
\item otherwise
\begin{itemize}
\item[6.1.] $A_q := p$; $F_q := \max(F_q, T_q + (1 + \rho) \cdot \delta)$;
\item[6.2.] send $\langle \texttt{ok}, q, \textit{Start}_p \rangle$ to $p$.
\end{itemize}
\end{enumerate}

\noindent
Meanwhile, process $p$ waits for \texttt{ok} messages:

\begin{enumerate}
\setcounter{enumi}{6}
\item wait for a
$\langle \texttt{ok}, q, \textit{Start}_p \rangle$ from each process $q$
in a quorum of $\cal Q$ or until
$C_p \ge \textit{Start}_p + (1 - \rho) \cdot \delta$;
\item if \texttt{ok} messages are received from a quorum and $C_p < \textit{Start}_p + (1 - \rho) \cdot \delta$,
then $E_p := \textit{Start}_p + (1 - \rho) \cdot \delta$
(we say that the \texttt{OQwBD} algorithm \emph{completed});
\item if not a sufficient number of \texttt{ok} responses are received
by $C_p \ge \textit{Start}_p + (1 - \rho) \cdot \delta$,
then this instantiation of \texttt{OQwBD} \emph{failed}.
\end{enumerate}

\noindent
By measuring $(1 - \rho) \cdot \delta$ on its local clock,
process $p$ will stop believing it is leader before \emph{at most}
$\delta$ real time units
have expired since $p$ initiated \texttt{OQwBD}.
A process $q$, by measuring $(1 + \rho) \cdot \delta$, grants the lease
for \emph{at least} $\delta$ real time units since process $p$ started
\texttt{OQwBD}.
(Process $q$ calculates a maximum in order to ensure that $F_q$ can only
progress forwards as required by~\eqnref{eq:monotonic}).

Process $p$ can run \texttt{OQwBD} at any time.
We say that $p$ \emph{aborts} \texttt{OQwBD}
if $p$ starts a new execution before the current one completed.
Once aborted, responses for the earlier instantiation of
\texttt{OQwBD} will be ignored.
The clock value $\textit{Start}_p$ is included in the messages only to
identify an instantiation of \texttt{OQwBD}
(the tuple $(p, \textit{Start}_p)$
uniquely identifies an instantiation of \texttt{OQwBD});
receivers do not interpret the clock value, but return it in the response.
This way, process $p$ can ignore responses of aborted instantiations.

\begin{figure}
\begin{center}
\includegraphics[height=5in]{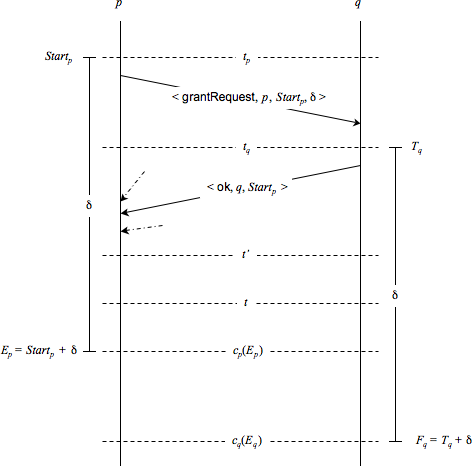}
\end{center}
\caption{\label{fig:leader} Example of a protocol exchange between a
process $p$ that initiated the protocol and a process $q$ that granted
the lease and is included in the quorum that
$p$ uses to complete the protocol.  For clarity, the drift $\rho = 0$.
Dashed horizontal lines indicate
real time.  The labels to the left are clock values on $p$'s clock;
the labels on the right are clock values on $q$'s clock.}
\end{figure}

To prove that~\eqnref{eq:invariant} holds for any $p$ and $t$,
consider a process $p$ and a time $t$ for which $C_p(t) < E_p(t)$ holds,
and note that $t < \invt{p}{E_p(t)}$ (Lemma~\ref{lemma:clock}).
Let $t'$ be the time at which the last instantiation of \texttt{OQwBD}
completed at process $p$.
Note that $p$ updates $E_p$ only at this time, and since this is
the last instantiation of the protocol, we have the following:

\begin{equation}\label{eq:finish}
E_p(t) = E_p(t')
\end{equation}

Let $t_p$ be the time at which $p$ assigned $\textit{Start}_p$ in this
last instantiation of \texttt{OQwBD}, and thus

\begin{equation}\label{eq:start}
\textit{Start}_p(t') = \textit{Start}_p(t_p)
\end{equation}

We have to show that there exists a quorum $Q$ so that
$\forall q \in Q: A_q(t) = p ~\wedge~ C_q(t) < F_q(t)$.
We show that the quorum that responded to $p$ and caused $p$
to complete \texttt{OQwBD} is such a quorum.

Let $Q$ be the quorum that responded to $p$.
Consider a process $q \in Q$ and
let $t_q$ be the time at which some process $q$ sampled the local clock
resulting in a value $T_q$, so that $T_q = C_q(t_q)$.  Note that
\begin{equation}\label{eq:ground}
t_p \le t_q \le t' \le t < \invt{p}{E_p(t)}.
\end{equation}
(See Figure~\ref{fig:leader} for an illustration in the case $\rho = 0$.)

\begin{lemma}\label{lemma:oqwbd}
$\invt{p}{E_p(t)} \le \invt{q}{F_q(t)}$.
\end{lemma}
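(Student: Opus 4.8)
The plan is to bound the two sides of the claimed inequality by a common intermediate quantity built from the real times $t_p$, $t_q$ and the period $\delta$, exploiting the fact that the constants $(1-\rho)$ and $(1+\rho)$ in \texttt{OQwBD} were chosen precisely so that the drift bound of Equation~\eqnref{eq:drift} works in the required directions.

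First I would unfold the two clock values involved. Since $p$ sets $E_p := \textit{Start}_p + (1-\rho)\delta$ at the completing step and $\textit{Start}_p(t_p) = C_p(t_p)$, Equations~\eqnref{eq:finish} and~\eqnref{eq:start} give $E_p(t) = C_p(t_p) + (1-\rho)\delta$. On the other side, when $q$ handles the request (at or just after $t_q$) it sets $F_q := \max(F_q, T_q + (1+\rho)\delta)$ with $T_q = C_q(t_q)$, so $F_q$ becomes at least $C_q(t_q) + (1+\rho)\delta$; by the monotonicity invariant~\eqnref{eq:monotonic} this lower bound persists through time $t$, giving $F_q(t) \ge C_q(t_q) + (1+\rho)\delta$.

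Next I would turn each clock value into a real-time bound through the drift condition and the monotone inverse $\invt{p}{\cdot}$, $\invt{q}{\cdot}$ (well-defined and monotone by Lemma~\ref{lemma:clock}). For $p$, the lower drift bound gives $C_p(t_p+\delta) \ge C_p(t_p) + (1-\rho)\delta = E_p(t)$; applying $\invt{p}{\cdot}$ to both sides yields $t_p+\delta \ge \invt{p}{E_p(t)}$. For $q$, the upper drift bound gives $C_q(t_q+\delta) \le C_q(t_q)+(1+\rho)\delta \le F_q(t)$; applying $\invt{q}{\cdot}$ yields $t_q+\delta \le \invt{q}{F_q(t)}$. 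Chaining these with $t_p \le t_q$ from Equation~\eqnref{eq:ground} gives $\invt{p}{E_p(t)} \le t_p+\delta \le t_q+\delta \le \invt{q}{F_q(t)}$, which is the claim.

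The step I expect to be the crux is keeping the two drift inequalities pointed in opposite directions: $E_p$ must be carried to a real time no later than $t_p+\delta$ via the slow-clock (lower) drift bound, whereas $F_q$ must be carried to a real time no earlier than $t_q+\delta$ via the fast-clock (upper) bound; swapping either bound would break the argument. A secondary subtlety is justifying that $F_q(t)$ has not fallen below the value established near $t_q$, which is exactly what the $\max$ in step~6.1 together with invariant~\eqnref{eq:monotonic} guarantees, licensing the comparison $C_q(t_q+\delta) \le F_q(t)$ at the later time $t$.
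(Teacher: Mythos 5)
Your proof is correct and follows essentially the same route as the paper's: unfold $E_p(t)$ and $F_q(t)$ via the algorithm's assignments, apply the two drift bounds of Equation~\eqnref{eq:drift} in opposite directions to obtain $\invt{p}{E_p(t)} \le t_p+\delta$ and $t_q+\delta \le \invt{q}{F_q(t)}$, and chain through $t_p \le t_q$ from Equation~\eqnref{eq:ground}. Your remark that the $\max$ in step~6.1 only guarantees $F_q(t_q) \ge C_q(t_q)+(1+\rho)\delta$ rather than equality is in fact slightly more careful than the paper's line~(9), and the inequality is all that is needed.
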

\begin{proof}
\begin{align*}
(1) \;\; & E_p(t) = E_p(t') & (\textrm{Equation~\eqnref{eq:finish}}) \\
(2) \;\; & E_p(t') = \textit{Start}_p(t') + (1 - \rho) \delta & (\textrm{Algorithm \texttt{OQwBD}}) \\
(3) \;\; & \textit{Start}_p(t') = \textit{Start}_p(t_p) & (\textrm{Equation~\eqnref{eq:start}}) \\
(4) \;\; & \textit{Start}_p(t_p) = C_p(t_p) & (\textrm{Algorithm \texttt{OQwBD}}) \\
(5) \;\; & C_p(t_p) + (1 - \rho) \delta \le C_p(t_p + \delta) &  (\textrm{Equation~\eqnref{eq:drift}}) \\
(6) \;\; & E_p(t) \le C_p(t_p + \delta) & (\textrm{Combining (2) thru (5)}) \\
(7) \;\; & \invt{p}{E_p(t)} \le t_p + \delta & (\textrm{Lemma~\ref{lemma:clock}}) \\
(8) \;\; & F_q(t) \ge F_q(t_q) & (\textrm{Equations~\eqnref{eq:ground} and~\eqnref{eq:monotonic}}) \\
(9) \;\; & F_q(t_q) = C_q(t_q) + (1 + \rho)\delta & (\textrm{Algorithm \texttt{OQwBD}}) \\
(10) \;\; & C_q(t_q) + (1 + \rho)\delta \ge C_q(t_q + \delta) & (\textrm{Equation~\eqnref{eq:drift}}) \\
(11) \;\; & F_q(t) \ge C_q(t_q + \delta) & (\textrm{Combining (8), (9), (10)}) \\
(12) \;\; & t_q + \delta \le \invt{q}{F_q(t)} & (\textrm{Lemma~\ref{lemma:clock}}) \\
(13) \;\; & (t_p + \delta) \le (t_q + \delta) & (\textrm{Equation~\eqnref{eq:ground}}) \\
(14) \;\; & \invt{p}{E_p(t)} \le (t_p + \delta) \le (t_q + \delta) \le \invt{q}{F_q(t)} & (\textrm{Combining (7), (12), (13)})
\end{align*}
\hfill\qed
\end{proof}

It remains to show that $A_q(t) = p \wedge C_q(t) < F_q(t)$.
This follows directly from
$t_q \le t' \le t < \invt{p}{E_p(t)} \le \invt{q}{F_q(t)}$.
After assigning $A_q$ and $F_q$ between $t_q$ and $t'$, $A_q$ cannot be
changed until $\invt{q}{F_q(t)}$ at the earliest.

Note that no effort is made to detect process crashes.
If $\textit{isLeader}_p(t)$ at the time a process $p$ crashes, that process
continues to be leader until there is no longer a quorum of processes
that grant a lease to $p$.

\section{Clocks with Bounded Skew}\label{sec:skew}

Our second instance of a Nerio class leader election protocol is similar
to the first, but instead of
assuming bounded drift (Equation~\eqnref{eq:drift}) we assume that
clocks at any two processes always differ by at most $\Delta$:

\begin{equation}\label{eq:skew}
\forall p, q, t: -\Delta \le C_p(t) - C_q(t) \le \Delta.
\end{equation}

We present a new algorithm called \texttt{Obtain Quorum with Bounded
Skew}, or \texttt{OQwBS} for short.
(Skew is the difference between two clock values at the same time.)
The variables of \texttt{OQwBS} are the same as those of \texttt{OQwBD}.
A process $p$ can initiate \texttt{OQwBS} as follows:

\begin{enumerate}
\item set $\textit{Start}_p := C_p$ (sample starting time);
\item select a time period $\delta$, $\delta > 0$;
\item broadcast $\langle \texttt{grantRequest}, p, \textit{Start}_p, \delta \rangle$.
\end{enumerate}

\noindent Upon receipt, a process $q$ does the following:

\begin{enumerate}
\setcounter{enumi}{3}
\item $T_q := C_q$ (sample local time);
\item if $p \ne A_q \wedge T_q < F_q$, then ignore the request;
\item otherwise
\begin{itemize}
\item[6.1.] $A_q := p$; $F_q := \max(F_q, \textit{Start}_p + \delta + \Delta)$;
\item[6.2.] send $\langle \texttt{ok}, q, \textit{Start}_p \rangle$ to $p$.
\end{itemize}
\end{enumerate}

Note that because any two clocks differ by at most $\Delta$, 
$q$ can interpret $\textit{Start}_p$ with respect to its own clock.
As before, process $p$ waits for \texttt{ok} responses:

\begin{enumerate}
\setcounter{enumi}{6}
\item wait for a
$\langle \texttt{ok}, q, \textit{Start}_p \rangle$ from each process $q$
in a quorum of $\cal Q$ or until
$C_p \ge \textit{Start}_p + \delta$;
\item if \texttt{ok} messages are received from a quorum and $C_p < \textit{Start}_p + \delta$,
then $E_p := \textit{Start}_p + \delta$
(we say that the \texttt{OQwBS} algorithm \emph{completed});
\item if not a sufficient number of \texttt{ok} responses are received
by $C_p \ge \textit{Start}_p + \delta$,
then this instantiation of \texttt{OQwBS} \emph{failed}.
\end{enumerate}

Again, the proof of Leader Uniqueness is based on showing that
$\invt{p}{E_p(t)} \le \invt{q}{F_q(t)}$, and it is easy to see why this is true.

\section{Comparison}\label{sec:compare}

In the \texttt{OQwBD} protocol of Section~\ref{sec:drift},
a process $q$ may grant a lease for a process $p$
long beyond $\invt{p}{E_p(t)}$ if the \texttt{grantRequest} message
to $p$ is delayed that much.
If $p$ has failed, this grant could prevent other processes from becoming
leader.  It thus appears that the \texttt{OQwBS} protocol of
Section~\ref{sec:skew}, which is based on bounded skew,
has an important advantage.
However, below we will argue that in practice bounded drift is
more likely to be guaranteed than bounded skew,
so \texttt{OQwBD} is likely to be more robust in practice.

Hardware clock manufacturers often specify a bound on clock drift,
and this bound is typically within the range of $10^{-7}$ to $10^{-5}$
given a sufficiently stable temperature within the
casing of a computer chassis.
For performance measurements, in which it is necessary
to measure the passage of time, rather than to tell what time it is,
operating systems usually provide access to the raw clock value,
as opposed to one that may be adjusted by a clock synchronization protocol
attempting to reduce skew.

Under virtualization, the hardware clock may be virtualized, and drift
would no longer be bounded.
Fortunately, Xen allows guests to sample the hardware clock.  Under VMware,
the hardware clock is not directly accessible.  Fortunately, VMware does
make CPU performance counters accessible, including a way to measure the
passage of time.
If this facility documents a bound on drift, then
this is enough for our purposes.
However, if a virtual machine is migrated, a clock may jump
arbitrarily, violating the assumptions that we make on clocks.

The protocol based on bounded skew allows processes to
leverage a bound $\Delta$ to avoid a process granting a lease more
than $\Delta$ beyond $\invt{p}{E_p(t)}$.
Bounded skew requires a \emph{clock synchronization algorithm}.
Clock synchronization algorithms require bounded
latency on communication and bounded execution times,
in addition to requiring bounded clock drift.
In the absence of such bounds, clock synchronization algorithms such as
NTP provide, at best, probabilistic bounds on skew
(with unspecified probability).

Below we will only assume bounded clock drift and not bounded skew,
although the results are generalized easily.

\section{Releasing Grants}\label{sec:release}

It is sometimes useful for a process to give up the grants it received.
For example, if a process is not able to obtain grants from a quorum,
and thus does not have a lease on leadership, then it might as well
give up the grants that it has so that perhaps another process can be
more lucky.  Even if a process did obtain a lease and became leader,
it may for some reason give up its leadership by releasing its grants.
In this section we will show how this can be done without violating
invariant~\ref{eq:invariant}.

A process $p$ that wants to release its grants first aborts any instance of
\texttt{OQwBD} that it may be running.
Second, process $p$ sets $E_p(t)$ to $C_p(t)$.
We note that it is always safe for a process~$p$ to do so
as this cannot affect the validity of invariant~\ref{eq:invariant}.
If $p$ was leader, it will no longer be leader as a result.
So at this point, $p$ is neither leader nor is it trying to become one.

Next $p$ broadcast a request to all peers to release its grant.
A process $q$ that receives such a message from $p$ will check to see if
$p = A_q$.  If not, it ignores the request.
If so, it will set $F_q$ to $C_q$, causing the grant to expire immediately,
\emph{even if} $F_q > C_q$.
The reader will notice that this violates invariant~\ref{eq:monotonic}, which
was used in Lemma~\ref{lemma:oqwbd} to proof that
$\invt{p}{E_p(t)} \le \invt{q}{F_q(t)}$.
However, this lemma was only shown to hold when $C_p(t) < E_p(t)$, and
because $p$ has reset $E_p$ to $C_p$, this precondition no longer holds.
Invariant~\ref{eq:monotonic} is not used elsewhere, and thus expiring the
grant does not violate invariant~\ref{eq:invariant}.

\section{Edicts}\label{sec:edicts}

Leaders create edicts, which they send to one or more processes.
Because of a lack of assumptions about message latencies and process
execution speeds, such
an edict may take an arbitrary amount of time to arrive at a process,
and may even be lost in the network.
Edicts may also be stored or forwarded.
So an old edict may be delivered after an edict
that was created more recently, possibly by a different leader.
Edict Ordering prevents chaos: it ensures that any
two different edicts can be compared and ordered in a manner consistent
with the real times of their creation.

For this ordering to make sense, the time an edict is created must be
defined properly.
The last step by a process $p$ creating an edict $e$ is to sample
$C_p$, obtaining a value $T = C_p(t)$.
In order to ensure Edict Validity, the process determines if
$T < E_p$.
If so, then $t$ is the creation time of edict $e$, that is,
$e.\textit{created} = t$.
(Unfortunately, even the leader itself cannot determine $t$.)
If not, then the edict creation fails, because at time $t$,
process $p$ may not have been leader.

We describe how $\textit{Order}_p(e_1, e_2)$ in
Equation~\eqnref{eq:edictordering}
is implemented for algorithm \texttt{OQwBD},
but the idea does not depend on specifics of \texttt{OQwBD}.
Extend the \texttt{ok} response (Step 2) from $q$ with $T_q$, such that $q$
sends $\langle \texttt{ok}, q, T_q, \textit{Start}_p \rangle$ to $p$.
Process $p$ awaits messages from all processes in a quorum $Q \in {\cal Q}$,
and constructs a \emph{Quorum Timestamp} $\textit{QT}_p$ as the
set of pairs $(q, T_q)$ for all $q \in Q$.
In addition, process $p$ maintains an \emph{Edict Counter} $\textit{EC}_p$,
initially~0, counting the number of edicts created by $p$.

Every time $p$ creates an edict, it tags that edict with
an \emph{Edict Timestamp} $(\textit{QT}_p, \textit{EC}_p)$ and
increments $\textit{EC}_p$.
Before sending the edict, process $p$ checks to see if $C_p < E_p$ to
make sure it is still leader.
If not, the edict is not valid and should be discarded.

We define an ordering on Edict Timestamps and show it
consistent with the real time in which the edicts were created.
Edict timestamps are lexicographically ordered, first by quorum timestamp
and then by the natural ordering on edict counters.
Quorum timestamps are ordered as follows:

\begin{equation}
\textit{QT}_1 < \textit{QT}_2 \Leftrightarrow
(\exists q, T_1, T_2: (q, T_1) \in \textit{QT}_1 \wedge (q, T_2) \in \textit{QT}_2
\wedge T_1 < T_2)
\end{equation}

We show that this ordering is consistent with the creation times of edicts.
Let $X$ be a completed instantiation of \texttt{OQwBD}.  $X$ has the
following attributes:

\vspace{1em}
\begin{tabular}{l|l}
$X.\textit{owner}$ & the process that initiated $X$ and became leader \\ \hline
$X.\textit{start}$ & the real-time when $X$ started (\emph{i.e.}, $\invt{X.\textit{owner}}{\textit{Start}_{X.\textit{owner}}}$)\\ \hline
$X.\textit{completion}$ & the real-time when $X$ completed \\ \hline
$X.\textit{QT}$ & the quorum timestamp that $X.\textit{owner}$ generated \\ \hline
$X.\textit{expiration}$ & the real-time when $X$ expires (\emph{i.e.}, $\invt{X.\textit{owner}}{E_{X.\textit{owner}}}$)
\end{tabular}
\vspace{1em}

\noindent
Some trivial observation about such an $X$ are:

\begin{equation}
X.\textit{start} \le X.\textit{completion} < X.\textit{expiration}
\end{equation}
\begin{equation}\label{eq:leadership}
\forall t: (X.\textit{completion} \le t < X.\textit{expiration}) \Rightarrow
					\textit{isLeader}_{X.\textit{owner}}(t)
\end{equation}
\begin{equation}\label{eq:interval}
\forall q, T: (q, T) \in X.\textit{QT} \Rightarrow X.\textit{start} \le \invt{q}{T} \le X.\textit{completion}
\end{equation}

\noindent
We order instantiations by their completion time, that is,
$X < X' \Leftrightarrow X.\textit{completion} < X'.\textit{completion}$.

\begin{lemma}
$\forall X, X': X < X' \Rightarrow X.\textit{QT} < X'.\textit{QT}$.
\end{lemma}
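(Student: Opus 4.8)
The plan is to exploit the quorum-intersection property \eqnref{eq:overlap}. Since $X.\textit{QT}$ and $X'.\textit{QT}$ are built from quorums $Q, Q' \in {\cal Q}$, there is a common process $q \in Q \cap Q'$; let $(q, T) \in X.\textit{QT}$ and $(q, T') \in X'.\textit{QT}$ be the pairs it contributes. By the definition of the order on quorum timestamps, to get $X.\textit{QT} < X'.\textit{QT}$ it suffices to exhibit one common process with $T < T'$, so I fix the $q$ above and aim to show $T < T'$, equivalently (Lemma~\ref{lemma:clock}) $\invt{q}{T} < \invt{q}{T'}$ --- that is, $q$ samples for $X$ strictly before it samples for $X'$. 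The two samples are taken as $q$ processes two distinct \texttt{grantRequest} messages, hence at distinct real times, so $\invt{q}{T} \ne \invt{q}{T'}$; the content of the argument is to exclude $\invt{q}{T'} < \invt{q}{T}$. Observe that \eqnref{eq:interval} by itself does not settle this, because $X'$ may start (and $q$ may even sample for $X'$) before $X$ completes, so the sampling spans $[X.\textit{start}, X.\textit{completion}]$ and $[X'.\textit{start}, X'.\textit{completion}]$ can overlap.

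The key auxiliary fact I would establish is that the common process holds a single assignment throughout the leadership interval of each instantiation: $A_q(t) = X.\textit{owner}$ for all $t \in [\invt{q}{T}, X.\textit{expiration})$, and symmetrically $A_q(t) = X'.\textit{owner}$ for all $t \in [\invt{q}{T'}, X'.\textit{expiration})$. Indeed, at its sample time $q$ runs step~6, setting $A_q := X.\textit{owner}$ and $F_q \ge T + (1+\rho)\delta$, and by \eqnref{eq:monotonic} $F_q$ never decreases afterward. The computation inside Lemma~\ref{lemma:oqwbd} (using \eqnref{eq:drift} and Lemma~\ref{lemma:clock}) shows that this single grant already reaches past $X.\textit{expiration}$ on $q$'s own clock, i.e. $\invt{q}{F_q(t)} \ge X.\textit{expiration}$ on the interval, so $C_q(t) < F_q(t)$ there by Lemma~\ref{lemma:clock}. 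While $C_q < F_q$, the guard of step~5 makes $q$ ignore every \texttt{grantRequest} from a process other than its current assignee, so $A_q$ cannot switch to a different owner; any request accepted meanwhile comes from $X.\textit{owner}$ itself and leaves the assignment unchanged.

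With this in hand I split on whether the owners coincide. If $X.\textit{owner} \ne X'.\textit{owner}$, the two assignment intervals are disjoint, since $A_q$ cannot take two values at one time, so one lies entirely before the other. If the $X'$-interval came first, then $X'.\textit{expiration} \le \invt{q}{T} \le X.\textit{completion}$, whence $X'.\textit{completion} < X'.\textit{expiration} \le X.\textit{completion}$, contradicting the hypothesis $X.\textit{completion} < X'.\textit{completion}$; hence the $X$-interval comes first and $\invt{q}{T} < X.\textit{expiration} \le \invt{q}{T'}$. If instead $X.\textit{owner} = X'.\textit{owner}$, I invoke the abort rule: a process aborts its running instantiation whenever it starts a new one, so two instantiations of the same owner that both complete have disjoint $[\textit{start}, \textit{completion}]$ spans, and $X.\textit{completion} < X'.\textit{completion}$ forces $X.\textit{completion} \le X'.\textit{start}$; then \eqnref{eq:interval} gives $\invt{q}{T} \le X.\textit{completion} \le X'.\textit{start} \le \invt{q}{T'}$. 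In both cases $\invt{q}{T} \le \invt{q}{T'}$, and since the two samples are distinct we obtain $\invt{q}{T} < \invt{q}{T'}$, i.e. $T < T'$, as required.

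The main obstacle is exactly the overlapping-instantiation scenario flagged in the first paragraph: because \texttt{OQwBD} bounds no message latency, $X'$ can begin before $X$ completes, so the wanted sample ordering does not follow from the completion ordering together with \eqnref{eq:interval}. The work is therefore concentrated in the assignment-stability claim of the second paragraph, which converts the lease guarantee (via Lemma~\ref{lemma:oqwbd} and the monotonicity \eqnref{eq:monotonic}) into a statement about how long $q$ is committed to a single owner, and in noticing that the different-owner case is then closed by Leader Uniqueness applied to $A_q$, whereas the same-owner case is instead governed by the abort rule rather than by leases.
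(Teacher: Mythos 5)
Your argument is correct and takes essentially the same route as the paper's proof: quorum intersection supplies a common grantor $q$, the same-owner case is closed by the abort rule, and the different-owner case by the fact that $q$ cannot grant leases to two distinct processes at the same time. You merely present it directly (and justify the persistence of $q$'s grant over $[\invt{q}{T}, X.\textit{expiration})$ more explicitly via the computation in Lemma~\ref{lemma:oqwbd}) where the paper argues by contradiction.
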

\begin{proof}
By contradiction, assume there can exist an $X$ and $X'$ such that
$X < X'$ (and thus $X.\textit{completion} < X'.\textit{completion}$)
and $\lnot(X.\textit{QT} < X'.\textit{QT})$.
Because quorums overlap, there must exists a $q$, $T$, $T'$ such that
$(q, T) \in X.\textit{QT}$ and $(q, T') \in X' .\textit{QT}$.
By assumption, $T \ge T'$ (for otherwise $X.\textit{QT} < X'.\textit{QT}$).
We consider two cases.
\begin{itemize}
\item $X.\textit{owner} = X'.\textit{owner}\,$: Then it must be the
case that $X.\textit{completion} < X'.\textit{start}$ (or $X$ would
have been aborted and could not have completed).
From \eqnref{eq:interval} it must be that $T < T'$, contradicting
the assumption that $T \ge T'$.

\item $X.\textit{owner} \ne X'.\textit{owner}\,$:
From time $\invt{q}{T}$ until $X.\textit{completion}$ (and beyond),
$q$ has granted a lease to $X.\textit{owner}$, and similarly,
from $\invt{q}{T'}$ to $X'.\textit{completion}$,
$q$ has granted a lease to $X'.\textit{owner}$.
Because $X.\textit{completion} < X'.\textit{completion}$ and
$\invt{q}{T} \ge \invt{q}{T'}$, it must be the case that at time
$X.\textit{completion}$, process $q$ has granted a lease both to
$X.\textit{owner}$ and $X'.\textit{owner}$.  But a process cannot
grant leases to two different processes at the same time. \hfill\qed

\end{itemize}
\end{proof}

Note, as a corollary, that quorum timestamps are well-ordered,
consistent with the ordering on instantiations of \texttt{OQwBD}.

\section{Leader Stability}\label{sec:stability}

When there is a leader, Leader Stability implies that the leader persists
in that role in the absence of failures and while
messages are delivered and processed in a timely fashion.
Suppose that message round-trip time is bounded by a known constant $d$.
In that case, if leader $p$ starts \texttt{OQwBD} before $c_p(E_p) - d$,
then it is able to extend its leadership before it expires.
So $p$ should use $\delta > 2d$ in order that in the
next period of leadership it is able to do so again.
Choices of $\delta$ are discussed in Section~\ref{sec:delta}.

\section{Eventual Election}\label{sec:eventual}

If there is no leader (\emph{i.e.}, $\forall p: C_p \ge E_p$) then
multiple processes could try to become leader.
There is no guarantee that any will succeed, however.
But if we could somehow ensure that only one process $p$ executes
\texttt{OQwBD}, and the process waits long enough to do so (so that all
$F_*$'s have expired), then \texttt{OQwBD} is guaranteed to succeed
eventually (after \texttt{GST}).
This would seem to create a circularity, as
as choosing $p$ requires solving leader election.
The way out is to use a weak version of leader election (which may select
multiple \emph{weak leaders})
in order to make successful completion of \texttt{OQwBD} likely.
The more likely it is that weak leader election selects only a single
weak leader, the more likely an instantiation of \texttt{OQwBD} terminates
successfully.
In addition, for Eventual Election to hold, after \texttt{GST} the weak leader
election protocol is required to produce a single weak leader.

Here is such a weak leader election algorithm:  Assume
processes in $P$ are ordered, that is, $p < q < ...$,
and elect the smallest process in $P$ that has not failed.
To this end, processes are organized into a virtual ring in order.
The scheme uses a failure detection algorithm such as simple pinging
or the more sophisticated $\phi$-accrual
failure detector~\cite{HDYK04}
that gives a better approximation of the failure status of processes.
Each process pairs with the closest predecessor and closest
successor on the ring that it considers correct by the failure detector,
and monitors it.
If a process $q$ believes it is the lowest correct process (because the
identifier of its predecessor is larger than its own),
then it considers itself a weak leader.
Note that under the properties of \texttt{GST}, failure detection becomes
accurate and the algorithm will produce a single leader.

A weak leader initiates \texttt{OQwBD} to try to become
a leader if
$A_q \ne q \wedge C_q < F_q$ (\emph{i.e.}, it is not currently granting
a lease to another process).
It does so periodically in order to deal with possible collisions and
message loss.

The Eventual Election property
states that under the conditions that hold after \texttt{GST},
a leader will eventually be chosen by the Nerio protocol if there is none yet
(and, because of Leader Stability, it will remain leader henceforth).
To see why Eventual Election holds for the presented protocols, note
that eventually only one process will attempt to become leader because
of the properties of weak leader election.
After all conflicting grants have expired, and because
round-trip latencies are bounded by $d$, eventually this process
will complete \texttt{OQwBD}.

\section{Discussion}\label{sec:discussion}

\subsection{Choice of $\delta$}\label{sec:delta}

A process that initiates \texttt{OQwBD} chooses some value for $\delta$.
No matter what value of $\delta$ is chosen,
Leader Uniqueness will hold, but choosing $\delta$ too small could
adversely affect Leader Stability and Eventual Election.
Therefore $\delta$ should be chosen large enough so a leader can
extend the period of its leadership without interruption,
but short enough so that recovery can be swift after the leader fails.

Suppose $d$ is an estimate for the round-trip time, and represents
that, say, in 99.9\% of round-trips the round-trip latency is less than $d$.
A leader might initiate \texttt{OQwBD} to extend its lease
before $E_p - (1 + \rho) \cdot d$ measured on its local clock.
Clearly, $\delta$ should be chosen larger than $d$ plus the time that
remains on the lease, which can be conservatively estimated by $p$ as
$(1 + \rho) \cdot (E_p - C_p)$.

In practice, $d$ is likely no larger than a few milliseconds on today's
hardware, assuming processing of Nerio messages receive a high priority,
and $\rho$ is likely no more than 10 microseconds.
But choosing $\delta$ as small as possible would likely result in too many
round-trips per second.  If we want space out instantiations of \texttt{OQwBD}
by at least $i$ time units, then we should choose
$\delta = d + \max(i, (1 + \rho) \cdot (E_p - C_p))$.

\subsection{Interference}

If more than one process concurrently tries to become leader,
then none may be able to enlist a quorum.
They each would then have to wait to let conflicting grants expire
before attempting to rerun \texttt{OQwBD}.

In both proposed Nerio protocols,
processes respond to the initiator only if they
grant the lease request.
But there is something to gain if the protocols are modified so that
if a process has granted a lease to another process,
then instead of just ignoring the grant request,
it responds with an error message.
The error message helps an initiator to determine if
there is hope of obtaining a quorum.

The protocols can be extended with
revocation requests to further avoid interference.
This further extension requires that grant and revocation requests from the same
source are delivered in FIFO order.
When a process $q$ receives a revocation request from a process $p$,
and if $A_q = p \wedge C_q < F_q$, then $q$ sets $F_q$ to $C_q$,
thereby releasing its grant to $p$.
(The FIFO order ensures that delayed revocation request do not
inadvertently revoke outstanding grants.)
Obviously, a process $p$ that sends a revocation request must first have
aborted the protocol, thus even if it ends up
collecting positive responses from a quorum, $E_p$ should not be advanced.

\subsection{Network Partitioning}

Nerio class protocols work work even if the network partitions if
there is a partition that contains a quorum of correct processes.
And if there is no such partition or if functionality is desired in minority
partitions (\emph{i.e.}, partitions that do not hold a quorum of processes),
then the weak leader election algorithm might
be used to assign a temporary, non-authoritative, leader in each partition
that can provide partial functionality.  

\subsection{Finding the Leader}

What if an external client, seeking that an edict be issued,
sends a request to a process $p$ in $P$ but $p$
is not currently leader?  If process $p$ has an unexpired
grant for another process $q$, then process $p$ can respond by giving $q$
as a forwarding address.  If not, process $p$ may attempt to become leader.
Failing that, $p$ may buffer the request until a leader emerges,
or return an error response.

\subsection{Leader Verification}

A process $q$ may want to check whether some other process $p$ is leader.
The following protocol, based on bounded drift, will accomplish this:

\begin{enumerate}
\item set $\textit{Start}_q := C_q$ (save starting time);
\item send $\langle \texttt{verifyLeadership}, q, \textit{Start}_q \rangle$
								to $p$.
\end{enumerate}

\noindent Upon receipt of a \texttt{verifyLeadership} message,
a process $p$ does the following:

\begin{enumerate}
\setcounter{enumi}{2}
\item calculate $\delta := (E_p - C_p) / (\rho + 1)$;
\item send $\langle \texttt{remainder}, p, \delta, \textit{Start}_q \rangle$ to $q$.
\end{enumerate}

Here $\delta$ equals the minimal amount of real time that is left of $p$'s
leadership.  Note that if $p$ is no longer leader, $\delta$ will be negative.
If $q$ receives the response, it calculates
$T = \textit{Start}_q + \delta \cdot (\rho - 1)$, and as long
as $T < C_q$ holds, $p$ is guaranteed to be leader (and possibly a bit longer
than that depending on rate differences between $C_p$ and $C_q$).

\subsection{Changing Membership}

Nerio protocols can be adapted to handle the case where $P$ changes over time.
We introduce \emph{epochs}, numbered consecutively starting at 0.
Each epoch $e$ is associated with a set of processes $P_e$ and
quorum system ${\cal Q}_e$ defined on $P_e$.
For simplicity, assume different epochs have non-overlapping sets of
processes:

\begin{equation}
\forall e, e': e \ne e' \Rightarrow P_e \cap P_{e'} = \emptyset.
\end{equation}

\noindent
(In practice, a process that is a member of more than one epoch
should maintain different copies of its state variables for each epoch.)

Each epoch is defined to be \texttt{PENDING}, \texttt{RUNNING},
or \texttt{TERMINATED}.
Each epoch starts in the \texttt{PENDING} state, except for epoch~0
which starts in the \texttt{RUNNING} state.
At any point in time, at most one epoch is in the \texttt{RUNNING} state,
and all prior epochs are \texttt{TERMINATED}.

If epoch $e$ is \texttt{RUNNING}, it can be terminated by getting
each process $q$ in some quorum of ${\cal Q}_e$ to set
$A_q = \bot \wedge F_q = \infty$.
(A process can only do so if there is no current outstanding grant.)
We say that process $q$ is \emph{wedged} if $A_q = \bot \wedge F_q = \infty$
holds.
Once a quorum of processes are wedged,
no process can become leader in that epoch.
At this same time, epoch $e+1$ automatically becomes \texttt{RUNNING}.
That is, an epoch is defined to be \texttt{RUNNING} iff all prior epochs are
\texttt{TERMINATED} and no quorum of processes are all wedged in that epoch.

A process $p$ in epoch $e+1$ ignores grant requests, and does not send any,
until it has learned that epoch $e$ is \texttt{TERMINATED}.
Process $p$ can learn that $e$ is \texttt{TERMINATED}
by querying processes in a quorum of ${\cal Q}_e$ and detecting that
these are wedged,
or by receiving a grant request from a process in $P_{e + 1}$.

Note that $\textit{isLeader}_p(t)$ will hold only if
epoch $e$ is \texttt{RUNNING} at time $t$ and $p \in P_e$ holds.
Because at most one epoch is \texttt{RUNNING}, Leader Uniqueness continues
to hold, even given multiple epochs.

Leader Stability no longer makes sense because epoch memberships are
non-overlapping.  However, an epoch $e+1$ that wants to start running could
have a particular process $p \in P_{e+1}$
be in charge of wedging the processes in $P_e$, by sending a
$\langle \texttt{grantRequest}, \bot, \textit{Start}_p, \infty \rangle$ message
to these processes,
and upon obtaining \texttt{ok} responses from a quorum of those processes,
send a regular grant request to the processes of epoch $P_{e+1}$.
Thus the new epoch has significant control over which process it wants to
be leader initially.

Once a process receives a grant request with $\delta = \infty$, but has
an outstanding (normal) grant request, it could buffer the special
grant request and grant it upon expiry of the current lease.  In that
case, if at most a quorum of processes in a \texttt{RUNNING} epoch
are faulty, eventually the epoch will become \texttt{TERMINATED} and
the processes of the next epoch will be able to learn so.
Therefore Eventual Election also continues to hold.

The reconfiguration protocol should be invoked when processes are
suspected of having crashed, or eventually there may no longer be
a quorum available to elect a leader.
The reconfiguration protocol can only make progress if a quorum in
$Q_e$ is correct and can be wedged, and thus if too many processes
crash, it is no longer possible to reconfigure.
Under manual intervention, an administrator could explicitly mark
certain processes as having failed.
The quorum system could then be adjusted with smaller quorums in
order to make progress.

Note that edict timestamps can be extended with epochs in order to
make sure the Edict Ordering continues to hold.

\section{Related Work}\label{sec:related}

Leader election is used in practical systems.
For example, the IEEE 1394 ``Firewire'' serial bus standard,
for the purpose of coordination among devices, includes such a protocol that
creates a spanning tree of devices with a unique root acting as leader.
Early work on leader election focused on efficiently finding extremas
(the node with the minimum or maximum identifier) in a connected network
topology of unknown size.
The problem was apparently first formulated and solved in 1977
by Gerard LeLann~\cite{Lel77}.
Many papers on this subject have appeared since.

In 1982, Hector Garcia-Molina defined the problem of leader election
in a distributed system that admits failures~\cite{GM82}, and presented
protocols.  That paper includes
separate definitions for synchronous and asynchronous systems.  For
a synchronous system, Garcia-Molina's definition of leader election
requires that there be at most
one leader at at time, and in the absence of failures a leader is elected 
within a fixed time limit.  For an asynchronous system, the definition
applies only to those nodes that experience synchronous communication---the
other nodes may end up with different leaders.

Consensus protocols~\cite{BM93} can be used to solve leader
election in both synchronous and asynchronous systems.
Each participant proposes itself as leader, and the consensus protocol
subsequently decides on one of the proposals.  Dividing time into time
slots, an instantiation of consensus could be used for each time slot.
Doing so would lead to unnecessarily high overhead, and many
consensus protocols rely on leader election themselves, creating a
circularity.

Fueled by leader-based consensus protocols, many papers discuss
leader election in partially asynchronous systems.
In this formulation, a protocol may output multiple leaders, but there
must exist a time after which the protocol output exactly one leader.
In asynchronous environments these protocols are probabilistic,
producing a single leader in case the environment is reasonably timely,
but that may produce multiple leaders in case the environment is not.
We call this \emph{weak leader election}, but it is also referred to
as \emph{local leader election}.

Weak leader election in asynchronous environments is
closely related to the failure detection problem, whereby a leader
is the node with the lowest (or highest) identifier that is not
suspected of having failed.
Fetzer and Cristian~\cite{FZ99}
study the problem of weak leader election in partitionable networks,
and use a technique based on leases~\cite{GC89}
(to define partition boundaries).
Stable (but weak) leader election was considered in~\cite{ADFT01}.
A performance comparison of three recent stable leader election algorithms
appears in~\cite{ST08}.  This paper also consider dynamic membership.

The problem of strong leader election in a partially synchronous
environment was discussed by Fetzer and S\"{u}{\ss}kraut in~\cite{FS06}.
Their protocol uses leases and quorums.
The Nerio protocols described in this paper
generalize this idea by defining an invariant (Equation~\eqnref{eq:invariant})
that all such protocols must satisfy, and can be used to transform any weak
leader election protocol into one that is both strong and stable,
and support dynamic membership.

\bibliographystyle{plain}
\bibliography{all}

\end{document}